\renewcommand{\vec}[1]{\ensuremath{\mathbf{#1}}}
\newtheorem{theorem}{Theorem}
\begin{document}

\IEEEoverridecommandlockouts
\title{Sub-optimality of Treating Interference as Noise in the Cellular Uplink}
\author{
\IEEEauthorblockN{Anas Chaaban and Aydin Sezgin}
\IEEEauthorblockA{Emmy-Noether Research Group on Wireless Networks\\
Institute of Telecommunications and Applied Information Theory\\
Ulm University, 89081 Ulm, Germany\\
Email: {anas.chaaban@uni-ulm.de, aydin.sezgin@uni-ulm.de}}
\thanks{%
This work is supported by the German Research Foundation, Deutsche
Forschungsgemeinschaft (DFG), Germany, under grant SE 1697/3.%
}
}

\maketitle


\begin{abstract}
Despite the simplicity of the scheme of treating interference as noise (TIN), it was shown to be sum-capacity optimal in the Gaussian 2-user interference channel in \cite{ShangKramerChen,MotahariKhandani,AnnapureddyVeeravalli}. In this paper, an interference network consisting of a point-to-point channel interfering with a multiple access channel (MAC) is considered, with focus on the weak interference scenario. Naive TIN in this network is performed by using Gaussian codes at the transmitters, joint decoding at the MAC receiver while treating interference as noise, and single user decoding at the point-to-point receiver while treating both interferers as noise. It is shown that this naive TIN scheme is never optimal in this scenario. In fact, a scheme that combines both time division multiple access and TIN outperforms the naive TIN scheme. An upper bound on the sum-capacity of the given network is also derived.
\end{abstract}

\section{Introduction}
A simple communication strategy in wireless interference networks is treating interference as noise (TIN). In this scheme, the receivers' strategy is the same as if there were no interference, i.e., interference is ignored. Although this scheme seems very trivial and too simple to be optimal, its optimality in terms of sum-capacity in the 2-user Gaussian interference channel (IC) was recently shown in \cite{ShangKramerChen,MotahariKhandani,AnnapureddyVeeravalli} in the so-called noisy interference regime, where transmitters use Gaussian codes to encode their messages. If receivers in IC's with more than two users want to apply TIN, then each receiver has to treat more than one interferer as noise, contrary to the 2-user IC where each receiver faces only one interferer. Can it be optimal to treat more than one interferer as noise? This was answered in \cite{ShangKramerChen_KUserIC} where the optimality of TIN in a noisy interference regime in the K-user IC with $K\geq2$ was also established.

In this paper, we consider a network consisting of a point-to-point (P2P) channel, and a multiple access channel (MAC) interfering with each other. We call this network the partial IMAC (PIMAC), to distinguish it from the IMAC studied in \cite{SuhTse} and \cite{ChaabanSezgin_IMAC}. Such a setup might arise if a P2P communication system uses the same communication medium as a cellular uplink. The PIMAC setup was studied recently in \cite{ChaabanSezgin_EW2011} where its capacity in strong and very strong interference cases was obtained and a sum capacity upper bound was derived, and in \cite{ZhuShangChenPoor} where an achievable rate region for the discrete memoryless Z-PIMAC (partially connected PIMAC) was provided, which achieves the capacity of the Z-PIMAC with strong interference. A cognitive variant of MAC interference networks was also studied in \cite{DevroyeMitranTarokh_ISIT}.

The PIMAC combines properties from both the 2-user IC and the K-user IC, namely, one receiver receives two interference signals and the other receiver receives one interference signal. The capacity achieving scheme in the interference free MAC is known, and so is that in the interference free P2P channel \cite{CoverThomas}. Now, in a Gaussian PIMAC, if the receivers use their interference free strategies and ignore interference, then we call this naive TIN scheme the (SD-TIN) referring to the capacity achieving successive decoding (SD) scheme in the interference free MAC. In this case, receivers proceed with decoding using their interference-free-optimal decoders while treating interference as noise. Interestingly, we show in this paper that this scheme is never optimal, contrary to the 2-user IC and the K-user IC. To show this, we examine an alternative scheme which combines time division multiple access (TDMA) and TIN, which we call TDMA-TIN. We show that this scheme achieves a sum-rate which is larger that that of the SD-TIN scheme for all possible channel parameters. Therefore, TDMA-TIN outperforms SD-TIN, and the SD-TIN scheme is never optimal. Interestingly, while in the interference free MAC, the capacity achieving scheme of SD performs the same as TDMA in terms of sum-capacity, the same is not true in the presence of interference with TIN. We also introduce another scheme where transmitters can transmit at arbitrary power lower than their power constraint, and where the MAC receiver uses SD with TIN, and the P2P receiver uses TIN. We call this scheme the PC-TIN scheme. We introduce a sum-capacity upper bound and then compare the achievable sum-rate of the given schemes with this upper bound and the upper bound obtained in \cite{ChaabanSezgin_EW2011}. 

To this end, the organization of the paper is as follows. In Section \ref{Model} we describe the system model of the PIMAC. The three schemes of SD-TIN, TDMA-TIN, and PC-TIN are described in Section \ref{LowerBound}. The TDMA-TIN scheme and the SD-TIN scheme are compared with each other in section \ref{Comparison}. We provide a new sum-capacity upper bound in Section \ref{UpperBounds}. We discuss the results in Section \ref{Discussion} and we conclude in Section \ref{Conclusion} with an outlook.

\section{System Model}
\label{Model}

The PIMAC is a network consisting of a multiple access channel (MAC) and a point-to-point (P2P) channel interfering with each other as shown in Figure \ref{pIMAC}. The received signals at the two receivers of  the PIMAC can be written as
\begin{align}
Y_1&=X_1+X_2+h_{31}X_3+Z_1,\\
Y_2&=h_{12}X_1+h_{22}X_2+X_3+Z_2,
\end{align}
where the channel coefficients $h_{ij}$ from transmitter $i\in\{1,2,3\}$ to receiver $j\in\{1,2\}$ is real valued, $X_i$ is a random variable representing the transmit signal of transmitter $i$, and $Z_j$ is an independent and identically distributed (i.i.d.) additive white Gaussian noise (AWGN) with $Z_j\sim\mathcal{N}(0,1)$. It must be noted that the normalization of the noise variance and the direct channel gains is done without loss of generality.

\begin{figure}[t]
\centering
\includegraphics[width=0.8\columnwidth]{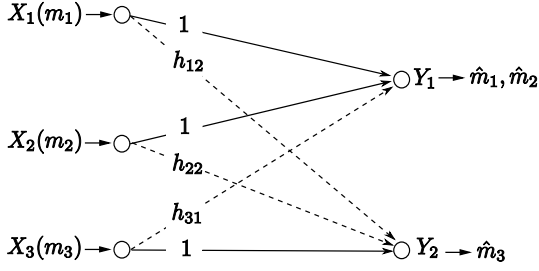}
\caption{A multiple access channel interfering with a point-to-point channel. The first receiver can be a base station in a cellular uplink system with two users, whereas the point-to-point channel can resemble a secondary system using the same communications medium.}
\label{pIMAC}
\end{figure}
 
Each transmitter uses an encoding function $f_i$ to encode its message $m_i$ chosen independently from the set $\mathcal{M}_i=\{1,\dots,2^{nR_i}\}$ into a codeword  of length $n$ symbols $X_i^n\in\mathbb{R}^n$ $$X_i^n=(X_{i,1},\dots,X_{i,n})$$ such that $\frac{1}{n}\sum_{j=1}^n\mathbb{E}[X_{i,j}^2]\leq P_i$. 

After receiving $Y_1^n$, the first receiver decodes $m_1$ and $m_2$ by using a decoding function $g_1$, i.e. $(\hat{m}_1,\hat{m}_2)=g_1(Y_1^n)$. The second receiver decodes $m_3$ from $Y_2^n$ by using a decoding function $g_2$, i.e. $\hat{m}_3=g_2(Y_2^n)$. The decoding results in an error $E_i$ if $m_i\neq\hat{m}_i$, $i\in\{1,2,3\}$. 

The encoding functions, decoding functions, and message sets define an $(n,2^{nR_1},2^{nR_2},2^{nR_3})$ code for the PIMAC, and induces an average probability of decoding error given by 
\begin{align}
P^{(n)}=\frac{1}{2^{nR_\Sigma}}\sum_{\vec{m}\in\mathcal{M}_1\times\mathcal{M}_2\times\mathcal{M}_3}P\left(\bigcup_{i=1}^3E_i\right),
\end{align}
where $R_\Sigma=\sum_{i=1}^3R_i$ and $\vec{m}=(m_1,m_2,m_3)$. We say that a rate triple $(R_1,R_2,R_3)$ is achievable if there exists a sequence of $(n,2^{nR_1},2^{nR_2},2^{nR_3})$ codes such that $P^{(n)}\to0$ as $n\to\infty$. The capacity region of the PIMAC denoted $\mathcal{C}$ is the set of all achievable rate triples, and the sum capacity is the highest achievable sum-rate 
\begin{align}
C_\Sigma=\max_{(R_1,R_2,R_3)\in\mathcal{C}}R_\Sigma.
\end{align}

\section{Treating interference as noise}
\label{LowerBound} 
Now, we introduce achievable schemes for the PIMAC that simply ignore the interference by treating it as additional noise. Namely, we introduce the successive decoding with TIN (SD-TIN) scheme, then the time division multiple access with TIN (TDMA-TIN) scheme, and finally, we introduce a variant of the SD-TIN scheme with power control (PC-TIN).

\begin{figure}[t]
\centering
\psfragscanon
\psfrag{x}[l]{$R_1$}
\psfrag{y}[b]{$R_2$}
\psfrag{L1}[l]{\footnotesize{MAC capacity region}}
\psfrag{L2}[l]{\footnotesize{TDMA achievable region}}
\includegraphics[width=0.7\columnwidth]{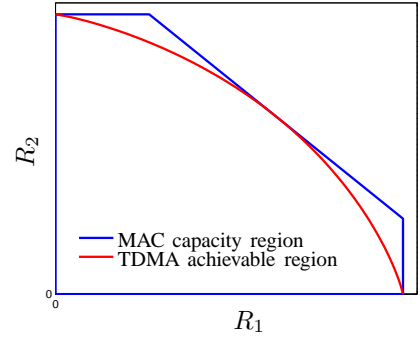}
\caption{The MAC capacity region achieved using SD, and the achievable region using TDMA. Both the SD scheme and the TDMA scheme achieve the sum-capacity of the MAC.}
\label{MAC_Reg}
\end{figure}

\subsection{SD-TIN}
It is known that using i.i.d. Gaussian codes and successive decoding (SD) achieves the capacity region of the Gaussian MAC \cite{CoverThomas} (cf. Figure \ref{MAC_Reg}). In the SD-TIN scheme, we assume that the MAC receiver in the PIMAC uses the same decoding strategy as if there were no interference, i.e. it uses SD and treats interference as noise. That is, the MAC receiver decodes one desired signal, say $X_1$, while treating $X_2$ and $X_3$ as noise, then it subtracts the contribution of $X_1$ from its received signal $Y_1$ and decodes $X_2$ while treating $X_3$ as noise. The second receiver decodes $X_3$ while treating both $X_1$ and $X_2$ as noise. Using this scheme, the following rates are achievable

\begin{align}
\label{IRB1}
R_1&\leq\frac{1}{2}\log\left(1+\frac{P_1}{1+h_{31}^2P_3}\right)\\
\label{IRB2}
R_2&\leq\frac{1}{2}\log\left(1+\frac{P_2}{1+h_{31}^2P_3}\right)\\
\label{SRB}
R_1+R_2&\leq\frac{1}{2}\log\left(1+\frac{P_1+P_2}{1+h_{31}^2P_3}\right)\\
\label{IRB3}
R_3&\leq\frac{1}{2}\log\left(1+\frac{P_3}{1+h_{12}^2P_1+h_{22}^2P_2}\right)
\end{align}

Inequalities \eqref{IRB1}-\eqref{SRB} represent the rate constraints for the MAC users, while \eqref{IRB3} represents the rate constraint for the P2P user. The effective noise power in \eqref{IRB1}-\eqref{IRB3} is $1+h_{31}^2P_3$ and $1+h_{12}^2P_1+h_{22}^2P_2$ which is the sum of the noise power and the interference power at the receivers. Since the sum-rate constraint \eqref{SRB} is always smaller than the sum of \eqref{IRB1} and \eqref{IRB2}, the achievable sum rate is the sum of \eqref{SRB} and \eqref{IRB3}. Thus, the achievable sum-rate using the SD-TIN scheme is given by
\begin{align}
\label{SD-TDMA_SR}
R_\Sigma^{SD-TIN}&=\frac{1}{2}\log\left(1+\frac{P_1+P_2}{1+h_{31}^2P_3}\right)\nonumber\\
&\quad+\frac{1}{2}\log\left(1+\frac{P_3}{1+h_{12}^2P_1+h_{22}^2P_2}\right)
\end{align}

\subsection{TDMA-TIN}
In this scheme, the users of the MAC component of the PIMAC use TDMA. Each user of this MAC uses the channel for a fraction of time, say $\alpha$ for user 1 and $1-\alpha$ for user 2, with $\alpha\in[0,1]$. This schemes transforms the PIMAC into two 2-user Gaussian IC's, each of which is active for a fraction of the time. User 1 uses i.i.d. $X_1\sim\mathcal{N}(0,P_1/\alpha)$ and user 2 uses i.i.d.  $X_2\sim\mathcal{N}(0,P_2/(1-\alpha))$. In the first fraction $\alpha$ of the time, transmitters 1 and 3 are active, and each of the receivers decodes its desired signal while treating interference as noise. Similarly, in the remaining fraction $1-\alpha$ of the time, transmitters 2 and 3 are active and the receivers also treat interference as noise. As a result, the following sum-rate is achievable using the TDMA-TIN scheme
\begin{align}
R_\Sigma^{TDMA-TIN}=\max_{\alpha\in[0,1]}A(\alpha)+B(\alpha)
\end{align}
where
\begin{align}
A(\alpha)&=\frac{\alpha}{2}\log\left(1+\frac{\frac{P_1}{\alpha}}{1+h_{31}^2P_3}\right)\nonumber\\
&\quad+\frac{1-\alpha}{2}\log\left(1+\frac{\frac{P_2}{1-\alpha}}{1+h_{31}^2P_3}\right)\nonumber\\
B(\alpha)&=\frac{\alpha}{2}\log\left(1+\frac{P_3}{1+\frac{h_{12}^2P_1}{\alpha}}\right)\nonumber\\
&\quad+\frac{1-\alpha}{2}\log\left(1+\frac{P_3}{1+\frac{h_{22}^2P_2}{1-\alpha}}\right).
\end{align}

Here, $A(\alpha)$ is the sum-rate achieved by the MAC users, and $B(\alpha)$ is the rate achieved by the P2P user.

\subsection{PC-TIN}
The scheme used here is similar to the SD-TIN, with optimization with respect to the transmit power of each transmitter. In this case, we allow the transmitters to use any transmit power $p_i\leq P_i$, $i\in\{1,2,3\}$. Although the transmitters should transmit at full power at weak interference, this is not the case as interference increases. In some cases, decreasing the transmit power of one of the transmitters can increase the achievable sum rate \cite{CharafeddineSezginPaulraj}. 

Again, transmitters use codes with i.i.d. Gaussian $\mathcal{N}(0,p_i)$ symbols where $p_i\leq P_i$. The receivers treat interference as noise. The achievable sum rate is given by
\begin{align}
\label{PC-TDMA_SR}
R_\Sigma^{PC-TIN}&=\max_{p_i\leq P_i} C(p_1,p_2,p_3)
\end{align}
where
\begin{align}
C(p_1,p_2,p_3)&=\frac{1}{2}\log\left(1+\frac{p_1+p_2}{1+h_{31}^2p_3}\right)\nonumber\\
\label{C_PC-TIN}
&\quad+\frac{1}{2}\log\left(1+\frac{p_3}{1+h_{12}^2p_1+h_{22}^2p_2}\right)
\end{align}
Notice that the SD-TIN scheme is a special case of the PC-TIN scheme. We separate these two cases to distinguish the naive TIN scheme (with full power) from the more general PC-TIN.

\section{TDMA-TIN outperforms SD-TIN}
\label{Comparison}
Let us now compare the SD-TIN scheme and the TDMA-TIN scheme with each other. First, we notice that the term $A(\alpha)$ is the achievable sum-rate using TDMA with time sharing parameter $\alpha$ in a MAC with a noise variance of $1+h_{31}^2P_3$. This sum-rate is maximized if we choose 
\begin{align}
\label{AlphaStar}
\alpha=\frac{P_1}{P_1+P_2}\triangleq\alpha^*,
\end{align}
and its maximum value is
\begin{align}
A(\alpha^*)=\frac{1}{2}\log\left(1+\frac{P_1+P_2}{1+h_{31}^2P_3}\right)
\end{align}
which is equal to the first term of $R_\Sigma^{SD-TIN}$ in \eqref{SD-TDMA_SR}. On the other hand, the function $B(\alpha)$ can be shown to satisfy the following
\begin{align}
\left.\frac{d B(\alpha)}{d\alpha}\right|_{\alpha=\alpha'}&=0\\
\frac{d^2 B(\alpha)}{d\alpha^2}&\geq0\quad\forall \alpha\in[0,1]
\end{align}
where
\begin{align}
\alpha'=\frac{h_{12}^2P_1}{h_{12}^2P_1+h_{22}^2P_2}.
\end{align}

Thus, $B(\alpha)$ is convex in $\alpha$ and achieves its minimum at $\alpha'$. This minimum is equal to 
\begin{align}
\min_{\alpha\in[0,1]}B(\alpha)&=B(\alpha')\nonumber\\
&=\frac{1}{2}\log\left(1+\frac{P_3}{1+h_{12}^2P_1+h_{22}^2P_2}\right)
\end{align}

which is the second term of $R_\Sigma^{SD-TIN}$ in \eqref{SD-TDMA_SR}. Therefore, 
\begin{align}
R_\Sigma^{TDMA-TIN}&=\max_{\alpha\in[0,1]}A(\alpha)+B(\alpha)\\
&\stackrel{(a)}{\geq} A(\alpha^*)+B(\alpha^*)\\
&\stackrel{(b)}{\geq} A(\alpha^*)+B(\alpha')\\
&= R_\Sigma^{SD-TIN}.
\end{align}
where $(a)$ follows by replacing the optimal value of $\alpha$ by $\alpha^*$ defined in \eqref{AlphaStar}, and $(b)$ follows since $B(\alpha)$ is minimized by $\alpha'$. Equality in $(b)$ holds if $\alpha^*=\alpha'$, i.e., $h_{12}^2=h_{22}^2$. Otherwise, the inequality in $(b)$ can be replaced with a strict inequality. As a result, by choosing $\alpha=\alpha^*$, the TDMA-TIN scheme outperforms the SD-TIN scheme. Notice that higher rates can be achieved by optimizing with respect to $\alpha$.

\section{Sum-capacity upper bounds}
\label{UpperBounds}
In the following, two upper bounds are introduced serving as benchmarks for the schemes we have investigated in the previous section. In \cite{ChaabanSezgin_EW2011}, a sum-capacity upper bound for the PIMAC was given as follows
\begin{align}
\label{UB1}
C_\Sigma&\leq\overline{C}_{\Sigma 1}\\
&\triangleq\min_{\substack{|\rho_1|,|\rho_2|\leq1,\\
\eta_1^2\leq1-\rho_2^2,\\
\eta_2^2\leq1-\rho_1^2}} I(X_{1G},X_{2G};Y_{1G},S_{1G})+I(X_{3G};Y_{2G},S_{2G})\nonumber
\end{align}
where $X_{iG}\sim\mathcal{N}(0,P_i)$ for $i\in\{1,2,3\}$, $Y_{jG}$ for $j\in\{1,2\}$ is the channel output when the input is $X_{iG}$\footnote{The bound in \cite{ChaabanSezgin_EW2011} derived for a PIMAC with a constraint of the form $\mathbb{E}[X_{i,j}^2]\leq P_i$ can be shown to hold in our case with $\frac{1}{n}\sum_{j=1}^n\mathbb{E}[X_{i,j}^2]\leq P_i$ by using convexity arguments.}. The upper bound is obtained by giving the receivers the following genie signals
\begin{align}
S_{1G}&=h_{12}X_{1G}+h_{22}X_{2G}+\eta_1W_1\\
S_{2G}&=h_{31}X_{3G}+\eta_2W_2,
\end{align}
as side information to facilitate the derivation of the upper bound, with noise $W_j\sim\mathcal{N}(0,1)$, noise scaling $\eta_j\in\mathbb{R}$, and where the genie noise is correlated with the noise at the receivers as $\mathbb{E}[W_jZ_j]=\rho_j$.

This upper bound is nearly tight if the interference is very weak. Here we give another sum-capacity upper bound based on a Z channel approach.
\begin{theorem}
The sum-capacity of the PIMAC with $h_{31}^2\leq1$ is upper bounded by
\begin{align}
\label{UB2}
C_{\Sigma}\leq\overline{C}_{\Sigma 2}\triangleq\frac{1}{2}\log\left(1+\frac{P_1+P_2}{1+h_{31}^2P_3}\right)+\frac{1}{2}\log(1+P_3).
\end{align}
\end{theorem}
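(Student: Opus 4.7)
The plan is to combine Fano's inequality with a genie-aided bound at receiver~2 and an application of the entropy power inequality (EPI) on the MAC side. Fano's inequality applied to the two decoders yields
\begin{align*}
n(R_1+R_2+R_3) \le I(X_1^n,X_2^n;Y_1^n) + I(X_3^n;Y_2^n) + n\epsilon_n.
\end{align*}
To handle the P2P term I would hand receiver~2 the genie signal $V^n = h_{12}X_1^n+h_{22}X_2^n$. Since $X_3^n$ is independent of $(X_1^n,X_2^n)$, $I(X_3^n;Y_2^n) \le I(X_3^n;Y_2^n,V^n) = h(X_3^n+Z_2^n) - \tfrac{n}{2}\log(2\pi e)$, reducing the interference-laden P2P to a clean Gaussian P2P.

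For the MAC term, $I(X_1^n,X_2^n;Y_1^n) = h(Y_1^n) - h(h_{31}X_3^n+Z_1^n)$. The usual maximum-entropy and Jensen arguments together with the average power constraints give $h(Y_1^n) \le \tfrac{n}{2}\log(2\pi e(1+P_1+P_2+h_{31}^2 P_3))$, so what remains is to lower bound $h(h_{31}X_3^n+Z_1^n)$ in terms of $h(X_3^n+Z_2^n)$.

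The key step, and where the hypothesis $h_{31}^2\le 1$ is used, is to split the receiver-1 noise. Since $h_{31}^2\le 1$, $Z_1$ is equidistributed with $h_{31}\tilde W + V$ where $\tilde W\sim\mathcal{N}(0,1)$ and $V\sim\mathcal{N}(0,1-h_{31}^2)$ are independent and independent of $X_3$. Hence $h_{31}X_3^n + Z_1^n$ is equidistributed with $h_{31}(X_3^n+\tilde W^n) + V^n$, and $X_3^n+\tilde W^n$ has the same law as $X_3^n+Z_2^n$. The EPI applied to the independent pair $h_{31}(X_3^n+\tilde W^n)$ and $V^n$ then yields
\begin{align*}
2^{\frac{2}{n}h(h_{31}X_3^n+Z_1^n)} \ge h_{31}^2\, 2^{\frac{2}{n}h(X_3^n+Z_2^n)} + 2\pi e(1-h_{31}^2).
\end{align*}

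Writing $A=2^{\frac{2}{n}h(X_3^n+Z_2^n)}$, substituting all of the above into the sum-rate bound collapses it to a function whose $A$-dependence is through the ratio $A/(h_{31}^2 A + 2\pi e(1-h_{31}^2))$. A one-line derivative check shows this ratio is non-decreasing in $A$ precisely when $h_{31}^2\le 1$, so the bound is maximized when $h(X_3^n+Z_2^n)$ is as large as possible, i.e., $A = 2\pi e(1+P_3)$ by maximum entropy. Substituting and taking $\epsilon_n\to 0$ yields exactly $\tfrac{1}{2}\log(1+(P_1+P_2)/(1+h_{31}^2 P_3))+\tfrac{1}{2}\log(1+P_3)$. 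The main obstacle is precisely this lower bound on $h(h_{31}X_3^n+Z_1^n)$: a direct EPI on $h_{31}X_3^n$ and $Z_1^n$ is useless since $h(X_3^n)$ admits no good lower bound, so one must use $h_{31}^2\le 1$ to peel $Z_1$ so that the EPI is applied to a sum whose entropy is controlled through the P2P part of the bound.
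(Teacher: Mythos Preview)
Your proof is correct and follows essentially the same route as the paper's. Both start from Fano, give receiver~2 the interference $(X_1^n,X_2^n)$ as a genie (your $V^n=h_{12}X_1^n+h_{22}X_2^n$ amounts to the same thing), bound $h(Y_1^n)$ by the Gaussian maximizer, and then face the same key obstacle: controlling $h(X_3^n+Z_2^n)-h(h_{31}X_3^n+Z_1^n)$ under $h_{31}^2\le 1$.

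The only difference is in how that key step is handled. The paper invokes the worst-case noise lemma of Diggavi--Cover (as in Etkin--Tse--Wang) to conclude directly that Gaussian $X_3$ maximizes the difference. You instead unpack that lemma: you split $Z_1$ as $h_{31}\tilde W+V$ (which is exactly where $h_{31}^2\le 1$ is needed), apply the EPI to $h_{31}(X_3^n+\tilde W^n)+V^n$, and then do a monotonicity check in $A=2^{\frac{2}{n}h(X_3^n+Z_2^n)}$. This EPI argument is precisely the standard proof of the worst-case noise lemma in this setting, so the two proofs are the same at their core; yours is simply more self-contained, while the paper's is shorter by citing the lemma as a black box. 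One cosmetic point: you use $V$ both for the genie signal and for the split-off Gaussian noise; rename one of them to avoid confusion.
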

\begin{proof}
We begin with Fano's inequality \cite{CoverThomas} to bound the achievable rates by
\begin{align}
n(R_1+R_2)&\leq I(X_1^n,X_2^n;Y_1^n)+n\varepsilon_{1n}\\
nR_3&\leq I(X_3^n;Y_2^n)+n\varepsilon_{2n}
\end{align}
where $\varepsilon_{1n}$ and $\varepsilon_{2n}$ approach zero as $n$ increases. We proceed as follows
\begin{align}
nR_3&\leq I(X_3^n;Y_2^n)+n\varepsilon_{2n}\\
&\leq I(X_3^n;Y_2^n,X_1^n,X_2^n)+n\varepsilon_{2n}\\
&\leq I(X_3^n;Y_2^n|X_1^n,X_2^n)+n\varepsilon_{2n}
\end{align}
which follows by giving $X_1^n$ and $X_2^n$ as side information to the second receiver, and then using the chain rule of mutual information and the independence of $X_1^n$, $X_2^n$, and $X_3^n$ since each transmitter chooses its message independently. Then
\begin{align}
n(R_\Sigma-\varepsilon_n)&\leq I(X_1^n,X_2^n;Y_1^n)+I(X_3^n;Y_2^n|X_1^n,X_2^n),
\end{align}
where $\varepsilon_n=\varepsilon_{1n}+\varepsilon_{2n}\to0$ as $n\to\infty$. Using the definition of mutual information, we get
\begin{align}
n(R_\Sigma-\varepsilon_n)&\leq h(Y_1^n)-h(Y_1^n|X_1^n,X_2^n)\nonumber\\
&\quad+h(Y_2^n|X_1^n,X_2^n)-h(Y_2^n|X_1^n,X_2^n,X_3^n)\nonumber\\
&= h(Y_1^n)-h(h_{31}X_3^n+Z_1^n)\nonumber\\
\label{InEq1}
&\quad+h(X_3^n+Z_2^n)-h(Z_2^n).
\end{align}
Consider the first term in \eqref{InEq1}. This can be bounded as follows
\begin{align}
h(Y_1^n)&\stackrel{(a)}{=} \sum_{i=1}^nh(Y_{1i}|Y_{1}^{i-1})\\
&\stackrel{(b)}{\leq} \sum_{i=1}^nh(Y_{1i})\\
\label{InEq2}
&\stackrel{(c)}{\leq} nh(Y_{1G}),
\end{align}
where $(a)$ follows by using the chain rule of entropy, $(b)$ follows since conditioning does not increase entropy, $(c)$ follows the Gaussian distribution $X_{iG}\sim\mathcal{N}(0,P_i)$ maximizes the differential entropy term under a covariance constraint \cite{CoverThomas}. The second and third term in \eqref{InEq1} can be bounded by
\begin{align}
\label{InEq3}
-h(h_{31}X_3^n+Z_1^n)&+h(X_3^n+Z_2^n)\\
&\stackrel{(d)}{\leq} -nh(h_{31}X_{3G}+Z_1)+nh(X_{3G}+Z_2),\nonumber
\end{align}
using the worst case noise lemma \cite{DiggaviCover} which applies since $h_{31}^2\leq1$ (cf. \cite{EtkinTseWang} e.g.). Finally
\begin{align}
\label{InEq4}
h(Z_2^n)=nh(Z_2),
\end{align}
since the noise is i.i.d.. Combining \eqref{InEq2}-\eqref{InEq4} and taking the limit as $n\to\infty$, we obtain the desired upper bound.
\end{proof}

\section{Discussion}
\label{Discussion}
In order to highlight the difference with the 2-user IC, we start by recalling some facts. In the 2-user Gaussian IC, each of the constituent P2P channels can use the same  communication strategy as if there were no interference, i.e., single user detection with TIN. In this case, the transmitters use full power for their transmission. This scheme was shown to be sum-capacity optimal in \cite{ShangKramerChen,MotahariKhandani,AnnapureddyVeeravalli} in the so-called noisy interference regime.

Consider the following two sum-capacity achieving strategies in the MAC: SD and TDMA. In SD, the MAC users transmit at full power and the receiver decodes the signals successively one after another. In TDMA, the users share the time, and each user uses all the available power in the time slot assigned to him. Both schemes perform the same in the MAC in terms of sum-rate and achieve its sum-capacity.

In the presence of interference, as in the PIMAC, the nodes of the MAC can still use the same schemes, and treat the interference at the MAC receiver as noise. So does the P2P channel receiver, it can also treat interference as noise. However, in this case, TDMA-TIN achieves a higher sum-rate compared to SD-TIN. Thus, SD-TIN can not be sum-capacity optimal. 

Contrary to the 2-user Gaussian IC where transmitting at full power and treating interference as noise can be optimal in the noisy interference regime, this is not the case in the PIMAC, even if we have very weak interference. This is illustrated in Figure \ref{SumRate1} where the achievable sum-rate with SD-TIN and TDMA-TIN is plotted  as a function of $h=h_{12}=h_{31}$ for a PIMAC with $P_1=P_2=P_3=10$ and $h_{22}=0.2$. In the same figure, we plot the achievable sum-rate using the PC-TIN scheme, and the sum-capacity upper bounds $\overline{C}_{\Sigma 1}$ in \eqref{UB1} and $\overline{C}_{\Sigma 2}$ in \eqref{UB2}.

\begin{figure}[t]
\centering
\psfragscanon
\psfrag{x}[l]{$h$}
\psfrag{y}[b]{Sum Rate (bits/channel use)}
\psfrag{L1}[l]{\footnotesize{$\overline{C}_{\Sigma 1}$}}
\psfrag{L2}[l]{\footnotesize{$\overline{C}_{\Sigma 2}$}}
\psfrag{L3}[l]{\footnotesize{SD-TIN}}
\psfrag{L4}[l]{\footnotesize{TDMA-TIN}}
\psfrag{L5}[l]{\footnotesize{PC-TIN}}
\includegraphics[width=0.9\columnwidth]{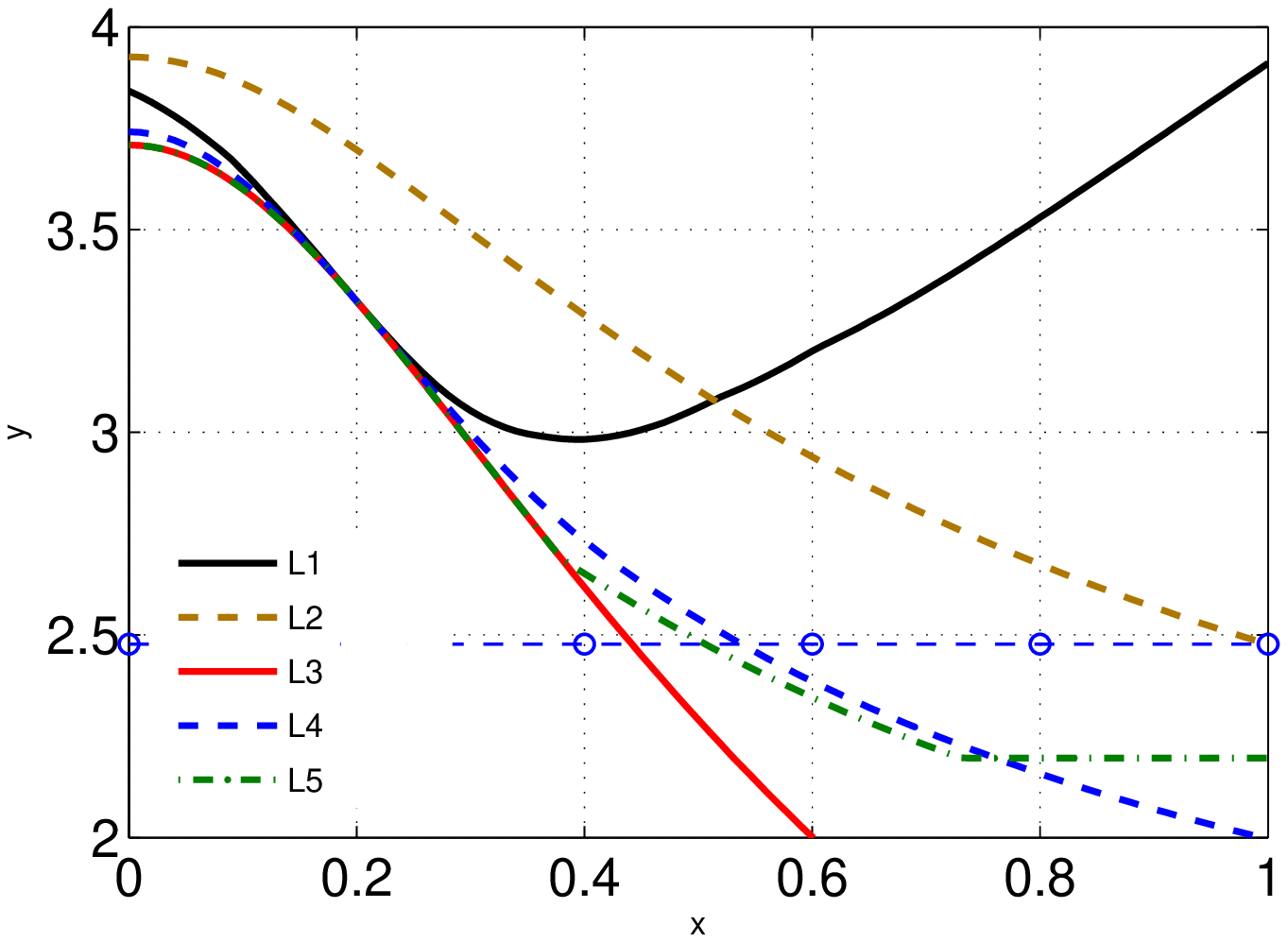}
\caption{Achievable sum-rate and upper bound for a PIMAC with $h_{22}=0.2$, $h_{12}=h_{31}=h$, and $P=10$.}
\label{SumRate1}
\end{figure}

Notice that the TDMA-TIN scheme always outperforms the SD-TIN scheme and that the achievable sum-rate is close to the upper bound at low values of $h$ (weak interference). The achievable rate using PC-TIN is optimized using exhaustive search. It can be seen that the PC-TIN schemes follows three different trends in this figure. Namely, for low values of $h$ (in this case $h\in[0,0.39]$), the sum-rate $C(p_1,p_2,p_3)$ in \eqref{C_PC-TIN} is maximized at $(p_1,p_2,p_3)=(P_1,P_2,P_3)$. That is, at low interference all users are active, and transmit with full power, and thus PC-TIN performs the same as the SD-TIN scheme. As $h$ increases, the optimal $(p_1,p_2,p_3)$ becomes $(0,P_2,P_3)$ and PC-TIN outperforms SD-TIN. This means that in this case, the first transmitter is silent, and the other two transmitters use full power. Notice that in this case, the IC formed by users 2 and 3 is 'less noisy' than that formed by users 1 and 3, and hence achieves higher sum-rate. As $h$ grows further, the optimal $(p_1,p_2,p_3)$ becomes $(P_1,P_2,0)$, and PC-TIN outperforms TDMA-TIN. In this case, only the MAC users are active, and the PIMAC is operated as a MAC since the third transmitter is kept silent. Notice however that in the last case there is no interference, and thus no TIN. The bottom line (with circle markers) is the achievable sum-rate using a scheme with TDMA between the MAC and the P2P users achieving
\begin{align}
R_\Sigma^{TDMA}&=\max_{\alpha\in[0,1]}\frac{\alpha}{2}\log\left(1+\frac{P_1}{\alpha}+\frac{P_2}{\alpha}\right)\nonumber\\
&\hspace{1.2cm}+\frac{1-\alpha}{2}\log\left(1+\frac{P_3}{1-\alpha}\right),
\end{align}
which is maximized with
\begin{align}
\alpha=\frac{P_1+P_2}{P_1+P_2+P_3}.
\end{align}

The upper bound $\overline{C}_{\Sigma 1}$ and the lower bounds coincide at $h=h_{22}=0.2$. In this case, the PIMAC can be modeled as an IC with one transmitter sending $X_1^n+X_2^n$. In this case, $\overline{C}_{\Sigma 1}$ is indeed the sum-capacity of the PIMAC as indicated in \cite[Remark 1]{ChaabanSezgin_EW2011}. The new upper bound $\overline{C}_{\Sigma 2}$ becomes tighter as $h$ increases and coincides with the TDMA lower bound at $h=1$ as shown in the Figure.

\section{Conclusion}
\label{Conclusion}
We have examined treating interference as noise in a network consisting of interfering point-to-point channel and a multiple access channel. We showed that the achievable scheme where the users transmit at full power and the receivers treat interference as noise is suboptimal. It is outperformed by a scheme which combines TDMA with treating interference as noise. We also derived a new sum-capacity upper bound. Although the known upper bounds do not coincide with the lower bounds, they are nearly tight if interference is weak. Tightening these bounds would be a topic for future work, in order to characterize the sum-capacity of the network at noisy/weak interference.

\bibliography{myBib}

\end{document}